\documentclass[11pt,a4paper]{article}

\usepackage{fullpage}
\usepackage[utf8x]{inputenc}
\usepackage{amsmath, amsthm}
\usepackage{wrapfig,graphicx,amssymb,textcomp,array,amsmath}
\graphicspath{{fig/}}
\usepackage{yhmath}

\usepackage{color}
\usepackage{changepage}
\usepackage{algpseudocode}
\algtext*{EndWhile}
\algtext*{EndIf}
\usepackage{algorithm}
\usepackage{enumerate}
\usepackage{enumitem}
\setlength{\arraycolsep}{0in}

\usepackage{microtype}

\usepackage{microtype}
\usepackage{xspace}
\usepackage[justification=centering]{subcaption}
\usepackage{bm}

\title{Improved Bounds for Guarding Plane Graphs with Edges
}

\author{Ahmad Biniaz\thanks{Cheriton School of Computer Science, University of Waterloo. Supported by NSERC and Fields postdoctoral fellowships. \tt{ahmad.biniaz@gmail.com}}
	\and Prosenjit Bose\thanks{School of Computer Science, Carleton University. Supported by NSERC. \tt{jit@scs.carleton.ca}} 
	\and Aur\'{e}lien Ooms\thanks{D\'epartement d'Informatique, Universit\'e libre de Bruxelles (ULB), Belgium,	Supported by the Fund for Research Training in Industry and Agriculture (FRIA). \tt{aureooms@ulb.ac.be}}
	\and Sander Verdonschot\thanks{School of Computer Science, Carleton University. Partially supported by NSERC and the Carleton-Fields Postdoctoral Award. \tt{sander@cg.scs.carleton.ca}}}

\date{\today}
\newcommand{\etal}{et~al.\xspace}

\newtheorem{lemma}{Lemma}
\newtheorem{corollary}{Corollary}

\newtheorem{theorem}{Theorem}

\newtheorem*{problem*}{Problem}

\setlist[description]{leftmargin=\parindent,labelindent=\parindent}

\begin{document}

\maketitle

\begin{abstract}
  An \emph{edge guard set} of a plane graph $G$ is a subset $\Gamma$ of edges of $G$ such that each face of $G$ is incident to an endpoint of an
  edge in $\Gamma$. Such a set is said to \emph{guard} $G$.
  We improve the known upper bounds on the number of edges required to guard
  any $n$-vertex embedded planar graph $G$:
  \begin{enumerate}
  	\item We present a simple inductive proof for a theorem of Everett and Rivera-Campo (1997) that $G$ can be guarded with at most $ \frac{2n}{5}$ edges, then extend this approach with a deeper analysis to yield an improved bound of $\frac{3n}{8}$ edges for any plane graph.
  	\item We prove that there exists an edge guard set of $G$ with at most $\frac{n}{3} +
  	\frac{\alpha}{9}$ edges, where $\alpha$ is the number of quadrilateral faces
  	in $G$. This improves the previous bound of $\frac{n}{3} + \alpha$ by Bose,
  	Kirkpatrick, and Li (2003). Moreover, if there is no short path between any two quadrilateral faces in $G$, we show that $\frac{n}{3}$ edges suffice, removing the
  	dependence on $\alpha$.
  \end{enumerate}
\end{abstract}

\section{Introduction}
\label{sec:introduction}

The original Art Gallery Problem: ``How many guards are necessary, and how many are sufficient to
patrol the paintings and works of art in an art gallery with $n$
walls?" was posed by Victor Klee in 1973. Chvatal \cite{Chvatal1975} offered the first solution to the question by proving that $n/3$ guards are sufficient and sometimes necessary to guard an $n$-vertex polygon. However, since then, an active area of research was spawned, where researchers studied many different variants of the problem, by allowing different types of guards and guarding different types of objects. The field is vast and many surveys on the topic have been written (see \cite{Handbook1, Handbook2, Joebook, Shermer92}). In this paper, the variant we study is when the guards are edges and the object guarded is a plane graph.

A \emph{plane graph} is a graph that is embedded in the plane without crossing edges.
Throughout this paper, $G$ is a plane graph with $n \geq 3$ vertices and at least one edge.
The graph $G$
divides the plane into regions called the \emph{faces} of
$G$. A \emph{guard set} for $G$ is a subset $\Gamma$ of edges of $G$ such that
every face of $G$ (including the outer face) contains at least
one endpoint of an edge in $\Gamma$ on its boundary. In other words, when the endpoints of the edges of $\Gamma$ guard the faces of $G$, we say that $\Gamma$ \emph{guards} $G$.
We focus on the
problem of finding a guard set for $G$ with minimum size. To avoid some notational clutter, we omit floors and ceilings in the statements of the bounds. However, since the size is necessarily integer, all fractional bounds can be rounded down for upper bounds and rounded up for lower bounds, except in the case when the upper bound is less than 1, in which case, we round up to 1.

For maximal outerplanar graphs, O'Rourke \cite{Rourke1983} showed that $\frac{n}{4}$ edge guards are always sufficient and sometimes necessary. In his proof, both the upper bound and lower bound require that every bounded face is a triangle and the outer face is a cycle. By removing this restriction, both the upper and lower bounds jump to $\frac{n}{3}$ for arbitrary outerplanar graphs~\cite{Bose1997, Chvatal1975}. For maximal plane graphs (triangulations), Everett and Rivera-Campo~\cite{Everett1997} showed that $\frac{n}{3}$ edge guards are always sufficient and Bose~\etal~\cite{Bose1997} showed that $\frac{4n-4}{13}$ edge guards are sometimes necessary. The upper bound is derived using the four-color theorem. Note the gap between the upper and lower bounds. The lower bound is derived by constructing a triangulation where $\frac{4n-4}{13}$ triangles are \emph{isolated}. Two triangles are isolated if there is no edge joining a vertex of one triangle with a vertex of the other triangle. Since it is impossible to isolate $\frac{n}{3}$ triangles in a maximal plane graph, this would suggest that the upper bound argument may not be exploiting all of the structure present in a maximal plane graph. 

Indeed, when one studies plane graphs that are no longer restricted to be maximal, the current best upper bound is no longer $\frac{n}{3}$. Everett and Rivera-Campo~\cite{Everett1997} used the four-color theorem to prove that $\frac{2n}{5}$ edges suffice. By using a different coloring approach, Bose, Kirkpatrick and Li~\cite{Bose2003} proved that $\frac{n}{3} + \alpha$ edges are sufficient, where $\alpha$ is the number of quadrilateral faces of $G$. 
Since outerplanar graphs are planar, $\frac{n}{3}$ edges are sometimes necessary and no better lower bound is known.
Although it seems that the number of quadrilateral faces plays a key role in this problem, it is unclear which upper bound is better in the worst case: $\frac{2n}{5}$ or  $\frac{n}{3} + \alpha$, since $\alpha$ can be as high as $n - 2$. Our main contribution is an improvement on both upper bounds. 
We give a simpler proof for Everett and Rivera-Campo's upper bound of $\frac{2n}{5}$ edges. In addition, by exploiting various properties of planar graphs, we are able to strengthen the bound to $\frac{3n}{8}$ edges. We then show that, for plane graphs with $\alpha$ quadrilateral faces, $\frac{n}{3} + \frac{\alpha}{9}$ edges suffice, reducing the dependency on $\alpha$. Table~\ref{table1} summarizes the best known upper and lower bounds.

\begin{table}[htb]
	\centering
	\begin{tabular}{|c|c|c|}
		\hline 
		Graph Type & Lower Bound & Upper Bound \\ \hline
		Maximal Outerplanar & $ \frac{n}{4}$\ \cite{Rourke1983} & $\frac{n}{4}$\ \cite{Rourke1983}\\ \hline
		Outerplanar &  $\frac{n}{3}$\ \cite{Bose1997} & $\frac{n}{3}$\ \cite{Chvatal1975}\\ \hline
		Maximal Planar & $ \frac{4n-4}{13}$ \cite{Bose1997} & $\frac{n}{3}$ \cite{Everett1997} \\ \hline
		Planar &  $\frac{n}{3}$\ \cite{Bose1997}  & $\min\{\frac{n}{3} + \frac{\alpha}{9}, \frac{3n}{8} \} $\ [this paper]\\ \hline
	\end{tabular}
	\caption{The best known upper and lower bounds for various types of graphs, where $n$ is the number of vertices and $\alpha$ is the number of quadrilateral faces.}
	\label{table1}
\end{table}
%
%
%
%
\section{Iterative Guarding}
\label{sec:iterative}

We first introduce a proof strategy that iteratively builds a guard set while shrinking the graph. We use this strategy to give a simple proof of Everett and Rivera-Campo's~\cite{Everett1997} result that $\frac{2n}{5}$ edges suffice for any plane graph, before strengthening this bound to $\frac{3n}{8}$. Note that, if the graph has a single face, it can be guarded by one edge and our bounds hold so long as $n \geq 3$. In the remainder of this section, we assume that the initial graph has at least two faces.

The general strategy works as follows. Suppose we are aiming for a bound of $c n$ edges, for some constant $c > 0$. We start with an empty partial guard set $\Gamma = \emptyset$. Given a plane graph $G$, we identify a set of vertices $V'$ and edges $E'$ such that (i) the edges in $E'$ guard all faces incident to vertices in $V'$ and (ii) we have that $|E'| \leq c |V'|$. We then add all edges of $E'$ to $\Gamma$, remove all vertices in $V'$ from $G$, along with their incident edges, and repeat until $G$ has one face left; i.e. $G$ is a forest. This face has already been guarded in the penultimate step, so we return $\Gamma$ as our guard set. Since we added at most $c$ edges for every vertex we removed, its size is at most $c n$.

As a warmp-up, we use this strategy to prove the following bound for 2-degenerate graphs (an undirected graph is $k$-degenerate if every subgraph has a vertex of degree at most $k$).

\begin{theorem}
	\label{thm:2-degenerate}
	Every $2$-degenerate plane graph with $n \geq 3$ vertices can be guarded by at most $\frac{n}{3}$ edges.
\end{theorem}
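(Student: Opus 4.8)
The plan is to apply the iterative strategy described just before the theorem, using $c = \frac{1}{3}$. Given a $2$-degenerate plane graph $G$ with at least two faces, I would find a vertex $v$ of degree at most $2$ (which exists since every subgraph of a $2$-degenerate graph has such a vertex), and take $V' = \{v\}$. The goal is then to exhibit a set $E'$ of edges with $|E'| \le c|V'| = \frac{1}{3}$ — that is, we need to guard all faces incident to $v$ while charging, on average, at most $\frac{1}{3}$ of an edge per vertex removed. Since a single vertex only buys us $\frac{1}{3}$ of an edge, we cannot afford to add an edge every time we delete a single degree-$\le 2$ vertex; instead I would delete a batch of such vertices and add one edge to cover all of them.

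The key observation is that when $\deg(v) \le 2$, all faces incident to $v$ are also incident to any neighbor of $v$ (if $v$ has degree $1$, its unique incident face touches its neighbor; if $v$ has degree $2$ with neighbors $u, w$, the (at most two) faces around $v$ each have both $u$ and $w$ on their boundary). So picking any edge incident to $v$ guards every face incident to $v$. The refinement: repeatedly strip off degree-$\le 2$ vertices. Concretely, pick a vertex $v$ of degree $\le 2$; if it has a neighbor, let $e = vu$ be an incident edge; otherwise $v$ is isolated and — since $G$ has at least two faces and hence at least one edge — this case needs separate handling, but an isolated vertex is incident only to the (already-guarded, or about-to-be-guarded) outer face, so it can simply be removed for free. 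Having chosen $e = vu$, add $e$ to $\Gamma$; this guards all faces incident to $v$ and to $u$. Now set $V'$ to be $v$, $u$, and also all further vertices we can remove "for free" — i.e., after deleting $v$ we can keep peeling degree-$\le 2$ vertices whose incident faces are already guarded by $e$. To make $|E'| \le \frac{|V'|}{3}$ with $|E'| = 1$, I need $|V'| \ge 3$.

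Here is where the degeneracy is used more carefully. After adding $e = vu$ and deleting $v$, the vertex $u$ still has its incident faces guarded by $e$, so $u$ can be removed too (deleting $u$ merges the faces around it, all already guarded). That gives $|V'| = 2$ so far, not enough. To get a third vertex, note that deleting $v$ and $u$ can only lower degrees, so the resulting graph still has a vertex of degree $\le 2$; but that new vertex's faces need not be guarded by $e$. The cleaner route is: choose $v$ of degree $\le 2$, and among its $\le 2$ neighbors pick the edge $e = vu$; then additionally observe that either (a) $v$ has a second neighbor $w$, and the edge $e$ already guards the faces around $v$, so after removing $v$ we can also remove one more low-degree vertex whose faces are guarded — or, more robustly, (b) handle the base/small cases ($n \le $ a few vertices, forests, graphs with one face) directly and otherwise argue we can always identify a group of $3$ vertices that a single edge guards. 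I would likely formalize this by induction on $n$: remove a group $V'$ of exactly $3$ vertices guarded by one edge when $n \ge 3$ allows it, reducing to a graph on $n-3$ vertices and invoking the induction hypothesis (with explicit care that the removed vertices' faces are all guarded by the one added edge, and that already-guarded faces stay guarded under vertex deletion since deleting a vertex only merges faces).

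The main obstacle I anticipate is precisely assembling the group of three removable vertices that a single edge guards: a bare degree-$\le 2$ vertex together with one neighbor gives only two, so I must argue that in any $2$-degenerate plane graph (with more than a constant number of vertices) one can always find such a triple — likely by taking a low-degree vertex $v$, a neighbor $u$, and then a low-degree vertex in $G - v - u$ that happens to still lie only on faces incident to $u$ or guarded by $e$, or else by a slightly different grouping (e.g. a path of low-degree vertices, or a low-degree vertex both of whose neighbors then become removable). Getting this case analysis tight and clean — ensuring the $\frac{1}{3}$ ratio is never violated and the guarded-ness invariant is maintained through deletions — is the crux; the rest is bookkeeping.
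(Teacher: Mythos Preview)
Your proposal has the right framework but a genuine gap at exactly the point you flag as ``the main obstacle.'' The paper resolves it with two small moves you are missing.

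First, vertices of degree $0$ or $1$ are removed \emph{for free}: set $E' = \emptyset$ and $V' = \{v\}$. Removing such a $v$ does not create any unguarded face (it merely merges or leaves unchanged the faces around it), so any guard set for $G \setminus \{v\}$ is already a guard set for $G$. This gives ratio $0/1 \le 1/3$ with no difficulty, and lets you assume the current graph has minimum degree exactly $2$ whenever you need to spend an edge.

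Second, in the degree-$2$ case you pick the wrong edge. You take $e = vu$ incident to $v$; this guards the faces at $v$ and $u$ but leaves you stuck at $|V'| = 2$. The paper instead takes a neighbor $u$ of $v$ and then a neighbor $w \neq v$ of $u$ (which exists because the minimum degree is now $2$), and sets $E' = \{(u,w)\}$. The point is that both faces incident to $v$ contain $u$ on their boundary, so the edge $(u,w)$ already guards every face incident to $v$ --- and of course every face incident to $u$ and $w$. Hence $V' = \{v,u,w\}$ works, giving $|E'|/|V'| = 1/3$ on the nose. Your search for a third ``free'' vertex after committing to the edge $vu$ is unnecessary once you shift the edge one step away from $v$.
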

\begin{proof}
	Let $G$ be a 2-degenerate plane graph with $n \geq 3$ vertices. If $G$ has one face, we guard it with a single edge and the theorem holds, so assume that $G$ has more than one face. We use the iterative strategy described above to construct a guard set $\Gamma$ for $G$ with $c = \frac{1}{3}$. Thus, all that is left to do is to describe how to find the sets $E'$ and $V'$.
	
	\begin{figure}[htb]
		\centering
		\includegraphics{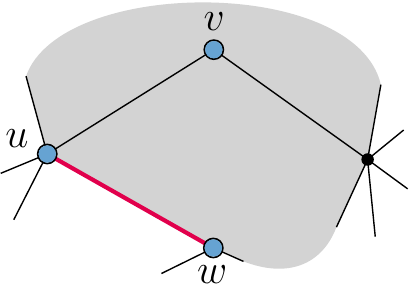}
		\caption{Edge $(u,w)$ guards both faces incident to a vertex $v$ of degree 2, allowing us to remove all three vertices.}
		\label{fig:degree-2}
	\end{figure}
	
	We consider two cases, depending on the minimum degree of $G$. If $G$ contains any vertex $v$ of degree $0$ or $1$, we let $E' = \emptyset$ and $V' = \{v\}$. While this does not technically satisfy our definition above that the edges in $E'$ guard all faces incident to vertices in $V'$, this operation is still safe, since any guard set for $G \setminus \{v\}$ is also a guard set for $G$.
	
	If $G$ does not contain any vertex of degree $0$ or $1$, the fact that it is 2-degenerate tells us that it must have a vertex $v$ of degree $2$. Let $u$ be a neighbor of $v$, and let $w \neq v$ be another neighbor of $u$ (see Figure~\ref{fig:degree-2}). Such a vertex $w$ must exist, since $G$ has minimum degree $2$. We now let $E' = \{(u, w)\}$ and $V' = \{v, u, w\}$. Since edge $(u, w)$ guards both faces incident to $v$, as well as all faces incident to $u$ and $w$, this completes the proof.
\end{proof}

This gives an alternate proof for the bound on outerplanar graphs \cite{Chvatal1975, Fisk78}, since they are 2-degenerate.

\begin{corollary}
	\label{cor:outerplanar}
	Every embedded outerplanar graph with $n \geq 3$ vertices can be guarded by at most $\frac{n}{3}$ edges.
\end{corollary}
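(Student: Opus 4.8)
The plan is to deduce this corollary directly from Theorem~\ref{thm:2-degenerate}, so the only thing to verify is that embedded outerplanar graphs actually fall within its scope, i.e.\ that they are $2$-degenerate plane graphs. They are certainly plane graphs, so what remains is the $2$-degeneracy. For this I would invoke the classical fact that every outerplanar graph on at least two vertices has a vertex of degree at most $2$: in an outerplanar embedding all vertices lie on the boundary of the outer face, and a short argument (for a maximal outerplanar graph these are the two ``ears'' of the triangulated polygon, and deleting edges only lowers degrees) shows such a vertex always exists. Since every subgraph of an outerplanar graph is again outerplanar, this degree bound holds for every subgraph, which is precisely the definition of $2$-degeneracy.

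With that in hand the corollary is immediate: given an embedded outerplanar graph $G$ with $n \ge 3$ vertices, $G$ is a $2$-degenerate plane graph, so Theorem~\ref{thm:2-degenerate} produces a guard set of size at most $\frac{n}{3}$. There is essentially no obstacle here beyond stating the $2$-degeneracy claim carefully; the one mild subtlety is that degeneracy is a property of \emph{all} subgraphs, not just of $G$ itself, which is exactly why the iterative proof of Theorem~\ref{thm:2-degenerate} was arranged to also absorb vertices of degree $0$ and $1$ (these inevitably arise once vertices start being deleted), so that no separate case analysis specific to outerplanar graphs is required.
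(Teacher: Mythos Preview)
Your proposal is correct and matches the paper's approach exactly: the paper simply remarks that outerplanar graphs are $2$-degenerate and invokes Theorem~\ref{thm:2-degenerate}, without spelling out the degeneracy argument. Your additional justification of $2$-degeneracy is fine and more detailed than what the paper provides.
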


Since a set of $\frac{n}{3}$ disjoint triangles comprises an outerplanar and 2-degenerate graph, the bounds of Theorem~\ref{thm:2-degenerate} and Corollary~\ref{cor:outerplanar} are best possible for these classes.

We use the same technique to prove the $\frac{2n}{5}$ and $\frac{3n}{8}$ bounds. Since $\frac{1}{3} < \frac{3}{8} < \frac{2}{5}$, we can use the arguments from the proof of Theorem~\ref{thm:2-degenerate} to eliminate vertices of degree $2$ or less, even if we are shooting for $c = \frac{2}{5}$ or $c = \frac{3}{8}$. Thus, we may assume for the remainder of the section that the graph has minimum degree $3$. Since planar graphs are 5-degenerate, we still need to handle vertices of degree $3$, $4$, or $5$. The following lemma gives us a little more to work with in these cases. For brevity, we denote a vertex of degree $d$ as a $d$-vertex, and one with degree at most $d$ as a $d^-$-vertex. Likewise, we denote a face with $k$ boundary edges as a $k$-face and one with at most $k$ edges as a $k^-$-face.

\begin{lemma}[Lebesgue \cite{Lebesgue1940}]
	\label{lem:lebesgue}
	In each plane graph with minimum degree $3$ there exists either a $3$-vertex
	incident to a $5^-$-face, or a $4$-vertex incident to a $3$-face, or a
	$5$-vertex incident to four $3$-faces.
\end{lemma}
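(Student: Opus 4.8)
The plan is to prove this classical ``unavoidable configuration'' statement by a discharging argument. Suppose, for contradiction, that $G$ is a plane graph with minimum degree $3$ in which none of the three configurations occurs. Then three properties hold: (a) every face incident to a $3$-vertex has at least six boundary edges; (b) no $4$-vertex is incident to a $3$-face; and (c) every $5$-vertex is incident to at most three $3$-faces. Combined with the minimum-degree hypothesis, (a) and (b) also imply that \emph{every} vertex on the boundary of a $3$-face has degree at least $5$. These four facts are the only structural input the argument will use.

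The first step is to assign initial charges: give each vertex $v$ the charge $d(v)-4$ and each face $f$ the charge $\ell(f)-4$, where $\ell(f)$ is the number of edges on the boundary of $f$. Since $\sum_v d(v)=\sum_f \ell(f)=2|E|$, Euler's formula gives a total charge of $4|E|-4(|V|+|F|)=-8$ when $G$ is connected (and a disconnected $G$ only makes the total more negative, so we may assume connectedness). The goal is then to redistribute charge, without changing the total, so that every vertex and every face ends up with nonnegative charge, contradicting the negative total.

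The second step is the redistribution itself, via two rules: (R1) every vertex sends $\tfrac{1}{3}$ to each incident $3$-face, and (R2) every face sends $\tfrac{1}{3}$ to each incident $3$-vertex. One then checks nonnegativity of the final charge, case by case. A $3$-face starts at $-1$ and, because each of its three vertices has degree at least $5$, receives $\tfrac{1}{3}$ from each via (R1), ending at $0$. A $3$-vertex starts at $-1$; by (a) it lies on no $3$-face, so it sends nothing, while it receives $\tfrac{1}{3}$ from each of its three incident ($6^+$)-faces via (R2), ending at $0$. A $4$-vertex starts at $0$ and, by (b), neither sends nor receives. A $5$-vertex starts at $1$ and, by (c), sends at most $3\cdot\tfrac{1}{3}=1$, ending at $0$. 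A vertex of degree $d\ge 6$ starts at $d-4$ and sends at most $\tfrac{d}{3}$, ending at $\tfrac{2d}{3}-4\ge 0$. Faces of size $4$ or $5$ have no incident $3$-vertex by (a), so they keep their nonnegative charge; a face of size $\ell\ge 6$ starts at $\ell-4$ and sends at most $\tfrac{\ell}{3}$, ending at $\tfrac{2\ell}{3}-4\ge 0$. Summing, the total is nonnegative, contradicting the value $-8$, and hence one of the three configurations must occur.

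The main obstacle is calibrating the discharging rules so that \emph{all} the tight cases — $3$-vertices, $4$-vertices, $5$-vertices, $6$-faces, and $6$-vertices incident to six $3$-faces — land exactly at $0$ simultaneously; in fact the value $\tfrac{1}{3}$ is forced (the $3$-face and $3$-vertex inequalities push it up to at least $\tfrac{1}{3}$, the $5$-vertex inequality pushes it down to at most $\tfrac{1}{3}$), which is a good sanity check. A secondary technical point is that in a plane graph that is not $2$-connected a face need not be bounded by a simple cycle, so (R1) and (R2) should be phrased in terms of vertex–face \emph{incidences} (a face of size $\ell$ has at most $\ell$ of them), or one should first reduce to the $2$-connected case; this does not affect any of the inequalities above.
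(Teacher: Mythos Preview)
The paper does not supply its own proof of this lemma; it is simply quoted as a classical result of Lebesgue and used as a black box. Your discharging argument is correct and is the standard proof of the statement (and is in the same spirit as the discharging argument the paper \emph{does} spell out for the stronger Lemma~\ref{lem:borodin}).
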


\begin{figure}[htb]
	\centering
	\begin{subfigure}[b]{0.32\textwidth}
		\centering
		\includegraphics{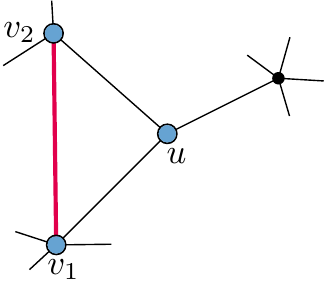}
		\caption{}
		\label{fig:remove-deg3-3}
	\end{subfigure}
	\begin{subfigure}[b]{0.32\textwidth}
		\centering
		\includegraphics{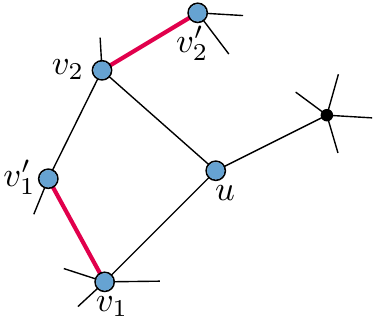}
		\caption{}
		\label{fig:remove-deg3-5}
	\end{subfigure}
	\begin{subfigure}[b]{0.32\textwidth}
		\centering
		\includegraphics{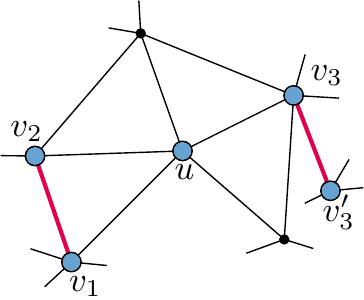}
		\caption{}
		\label{fig:remove-deg5}
	\end{subfigure}
	\caption{Guarding a vertex (a) of degree $3$ with two neighbors connected by an edge; (b) of degree $3$ with no neighbors connected by an edge; (c) of degree $4$ or $5$ and incident to a triangle.}
\end{figure}

\begin{theorem}
	\label{thm:2n5}
	Every plane graph with $n \geq 3$ vertices can be guarded by at most $\frac{2n}{5}$ edges.
\end{theorem}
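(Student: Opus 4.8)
The plan is to apply the iterative strategy with $c = \tfrac{2}{5}$. As already noted, vertices of degree $0$, $1$, or $2$ are handled exactly as in Theorem~\ref{thm:2-degenerate}, so we may assume the current graph $G$ has minimum degree $3$, and we need to exhibit a pair $(V', E')$ with $|E'| \le \tfrac{2}{5}|V'|$ covering all faces incident to $V'$. I would invoke Lemma~\ref{lem:lebesgue} to locate one of its three configurations and treat each case in turn, in each case choosing $V'$ to be the special vertex together with enough of its neighborhood, and $E'$ to be a small set of edges whose endpoints hit every face around $V'$.

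First, the case of a $3$-vertex $v$ incident to a $5^-$-face. Let $u_1, u_2, u_3$ be the neighbors of $v$; the three faces around $v$ are bounded by consecutive pairs. If two neighbors of $v$, say $u_1$ and $u_2$, are adjacent (Figure~\ref{fig:remove-deg3-3}), I would take $E' = \{(u_1,u_2)\}$ (or an analogous single edge) and let $V'$ consist of $v$ and the endpoints needed so that all faces incident to $V'$ are guarded — aiming to remove at least $3$ vertices using at most $1$ edge, which beats $\tfrac{2}{5}$. If no two neighbors of $v$ are adjacent, I would exploit the $5^-$-face incident to $v$: walking around that face from $v$ gives extra vertices, and since the face has at most $5$ edges, a careful choice of one or two edges on that face plus possibly one more will guard all faces incident to a set $V'$ of the right size (Figure~\ref{fig:remove-deg3-5}); the target ratio $\tfrac{2}{5}$ is exactly what makes "two edges for five removed vertices'' work here. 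Second, a $4$-vertex or $5$-vertex incident to a triangle (Figures~\ref{fig:remove-deg3-5} is for degree 3; Figure~\ref{fig:remove-deg5} covers degree $4$ or $5$): if $v$ has degree $4$ and sits on a $3$-face $vxy$, then the edge $(x,y)$ together with one more edge can guard all faces around $v$, so $|E'|=2$ and $|V'|$ includes $v$ and its four neighbors, giving $2 \le \tfrac{2}{5}\cdot 5$. For a $5$-vertex incident to four $3$-faces, the five faces around $v$ are four triangles and one other face; choosing the two "diagonal'' edges of the triangle fan covers all of these, so again $|E'| = 2$ and $V'$ is $v$ plus its five neighbors, i.e.\ $2 \le \tfrac{2}{5}\cdot 6$, comfortably within budget.

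After handling the three Lebesgue configurations I would remove $V'$ and its incident edges, add $E'$ to $\Gamma$, and recurse; since every step removes $|V'|$ vertices while adding at most $\tfrac{2}{5}|V'|$ edges, and the last one-face graph is already guarded, the total is at most $\tfrac{2n}{5}$. The main obstacle I anticipate is the bookkeeping of \emph{exactly which} faces are guarded in the degree-$3$, non-adjacent-neighbors case: removing $v$ alone leaves three faces merged into one, and I must be sure the chosen edges on the incident $5^-$-face have an endpoint on every face that survives, and that every removed vertex's faces are covered — this requires care because the $5^-$-face could share edges with the other two faces around $v$ in degenerate ways, and I would want to double-check that the stated count $|E'| \le \tfrac{2}{5}|V'|$ holds in each sub-configuration (small faces, repeated vertices on a face boundary, and the possibility that a neighbor of $v$ has low degree and could itself be eliminated more cheaply).
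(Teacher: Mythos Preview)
Your overall plan---iterate, reduce to minimum degree $3$, invoke Lebesgue, case-split---is exactly the paper's. Two points diverge, and the second is a real gap.

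First, for the $3$-vertex case the paper does \emph{not} touch the incident $5^-$-face. The only observation needed is that any two neighbours $v_1,v_2$ of a $3$-vertex $u$ together lie on all three faces around $u$. If $v_1v_2\in E(G)$, take $E'=\{v_1v_2\}$, $V'=\{u,v_1,v_2\}$. Otherwise pick $v_1'\ne u$ adjacent to $v_1$ and $v_2'\notin\{u,v_1'\}$ adjacent to $v_2$ (both exist since the minimum degree is $3$ and no two neighbours of $u$ are adjacent), and set $E'=\{v_1v_1',\,v_2v_2'\}$, $V'=\{u,v_1,v_1',v_2,v_2'\}$. This yields $2/5$ directly, with none of the small-face bookkeeping you flag as the main obstacle.

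Second---and this is where your argument actually fails---in the degree-$4$ and degree-$5$ cases you take $V'$ to be $v$ together with \emph{all} of its neighbours. The iterative method requires $E'$ to guard every face incident to every vertex of $V'$, not just the faces around $v$. Two edges have only four endpoints, so for a $5$-vertex at least one neighbour $w\in V'$ is not an endpoint of any edge in $E'$; faces incident to $w$ on the side away from $v$ are then unguarded. (The same issue arises for a $4$-vertex unless the other two neighbours happen to be adjacent.) The fix, and what the paper does, is to build $V'$ from $E'$ rather than from the neighbourhood of $v$: for a $5^-$-vertex $u$ on a triangle $uv_1v_2$, the edge $v_1v_2$ sees three of the at most five faces around $u$; choose a neighbour $v_3$ of $u$ lying on the remaining face(s) and any $v_3'\notin\{u,v_1,v_2\}$ adjacent to $v_3$, then take $E'=\{v_1v_2,\,v_3v_3'\}$ and $V'=\{u,v_1,v_2,v_3,v_3'\}$. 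Now every vertex of $V'$ except $u$ is an endpoint of $E'$, and $u$'s faces are covered by construction, so the requirement is met with ratio exactly $2/5$.
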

\begin{proof}
	We use the iterative method with $c = \frac{2}{5}$ and, as argued above, can assume that our graph $G$ has minimum degree at least $3$.
	
	First consider the case where $G$ has a vertex $u$ of degree $3$. Any two neighbors of $u$ together are incident to all faces incident to $u$. If any two neighbors $v_1$ and $v_2$ of $u$ are connected by an edge, we let $E' = \{(v_1, v_2)\}$ and $V' = \{u, v_1, v_2\}$ (see Figure~\ref{fig:remove-deg3-3}). Otherwise, let $v_1$ and $v_2$ be any two neighbors of $u$, and let $v_1' \neq u$ be a neighbor of $v_1$ and $v_2' \notin \{u,v_1'\}$ a neighbor of $v_2$ (see Figure~\ref{fig:remove-deg3-5}). We set $E' = \{(v_1, v_1'), (v_2, v_2')\}$ and $V' = \{u, v_1, v_1', v_2, v_2'\}$.
	
	Now suppose that $G$ has minimum degree at least $4$. Then Lemma~\ref{lem:lebesgue} tells us that there must be a $5^-$-vertex $u$ incident to a triangle. Let $v_1$ and $v_2$ be the other vertices of this triangle. Edge $(v_1, v_2)$ guards three of the four or five faces incident to $u$. Let $v_3$ be a neighbor of $u$ incident to the faces not guarded by $(v_1, v_2)$, and let $v_3' \notin \{u, v_1, v_2\}$ be a neighbor of $v_3$ (see Figure~\ref{fig:remove-deg5}). We set $E' = \{(v_1, v_2), (v_3, v_3')\}$ and $V' = \{u, v_1, v_2, v_3, v_3'\}$ (see Figure~\ref{fig:remove-deg3-5}).
	
	Thus, in each case we can find $E'$ and $V'$ such that the edges of $E'$ guard all faces incident to vertices in $V'$ and $|E'| \leq \frac{2}{5}|V'|$.
\end{proof}

To improve this bound further to $\frac{3n}{8}$, we need an even stronger version of Lemma~\ref{lem:lebesgue}, inspired by Borodin~\cite{Borodin1993}. Following his terminology, an edge is incident on a face if one of its endpoints is on the face. An edge is \emph{weak} if it is incident to two triangles, \emph{semiweak} if it is incident to exactly one triangle, and \emph{strong} otherwise.

\begin{lemma}
	\label{lem:borodin}
	Every plane graph with minimum degree $3$ contains one of the following:\\[0.2em]
	\indent $(L_1)$ a weak edge joining a $3$-vertex to a $10^-$-vertex;\\[0.1em]
	\indent $(L_2')$ a weak edge joining a $4$-vertex to a $6^-$-vertex;\\[0.1em]
	\indent $(L_2'')$ a weak edge joining a $4$-vertex $u$ to a $7$-vertex $v$ such that at least one edge adjacent to $(u,v)$ around $v$ is weak;\\[0.1em]
	\indent $(L_3)$ a weak edge joining a $5$-vertex incident to at least four $3$-faces to a $6^-$-vertex;\\[0.1em]
	\indent $(L_4)$ a semiweak edge joining a $3$-vertex to an $8^-$-vertex;\\[0.1em]
	\indent $(L_5)$ a semiweak edge joining a $4$-vertex to a $5^-$-vertex;\\[0.1em]
	\indent $(L_6)$ an edge incident to a $4$-face and joining a $3$-vertex to a $5^-$-vertex;\\[0.1em]
	\indent $(L_7)$ a $5$-face incident to at least four $3$-vertices.
\end{lemma}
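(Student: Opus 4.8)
The plan is to argue by contradiction via a discharging argument, which is the standard tool for statements of this flavor (and matches the fact that the lemma is attributed to an idea of Borodin). Suppose $G$ is a plane graph with minimum degree $3$ containing none of the configurations $(L_1)$ through $(L_7)$. Assign to each vertex $v$ the charge $\mathrm{ch}(v) = d(v) - 6$ and to each face $f$ the charge $\mathrm{ch}(f) = 2d(f) - 6$, where $d(\cdot)$ denotes degree (number of incident edges for a face). By Euler's formula, the total charge is $\sum_v (d(v)-6) + \sum_f (2d(f)-6) = -12 < 0$. Faces of size $\geq 3$ have nonnegative charge, so all negative charge sits on the $3$-, $4$-, and $5$-vertices. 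The goal is to design discharging rules that move charge from faces and high-degree vertices toward these low-degree vertices so that, in the absence of $(L_1)$--$(L_7)$, every vertex and every face ends with nonnegative charge — contradicting the negative total.

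First I would set up the rules. Triangular faces (the only faces with positive charge, namely $+0$... in fact a $3$-face has charge $0$) cannot help, so the charge must come from vertices and from faces of size $\geq 4$; a $4$-face has charge $+2$, a $5$-face has charge $+4$. Roughly: each $3$-vertex needs $3$ units, each $4$-vertex needs $2$, each $5$-vertex needs $1$. The rules should send charge along edges from a vertex $u$ to an adjacent low-degree vertex $v$, with the amount depending on whether the edge is weak, semiweak, or strong, and also let $4$-faces and $5$-faces donate to incident low-degree vertices. The forbidden configurations are exactly the ``dangerous'' local pictures in which a low-degree vertex cannot collect enough: e.g. $(L_1)$ forbids a $3$-vertex sitting on a weak edge to a vertex of degree $\leq 10$, so in the surviving graph every weak edge at a $3$-vertex goes to an $11^+$-vertex, which is rich enough to pay; $(L_4)$ handles the semiweak case with the threshold $8$; $(L_6)$ handles the case where the relief comes through a $4$-face; and $(L_2'),(L_2''),(L_3),(L_5),(L_7)$ are the analogous obstructions for $4$- and $5$-vertices and for $5$-faces incident to many $3$-vertices.

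The crux of the verification is a case analysis on each vertex $v$ of degree $3$, $4$, or $5$, classifying its incident edges as weak/semiweak/strong and its incident faces by size, and checking that in every combination not excluded by $(L_1)$--$(L_7)$ the incoming charge meets the deficit; symmetrically one checks that no donating vertex of degree $\geq 7$ (or $\geq 6$, etc., per the rule thresholds) is overdrawn, and that $4$-faces and $5$-faces retain nonnegative charge after their donations (here $(L_7)$, bounding the number of $3$-vertices on a $5$-face, is precisely what keeps a $5$-face solvent). The bookkeeping for the $5$-vertex incident to four $3$-faces — which is the case where Lemma~\ref{lem:lebesgue} already flagged trouble — and the asymmetric condition in $(L_2'')$ (one weak edge around the $7$-vertex) will be the fussiest parts, since there the charge a neighbor can spare depends on how many other low-degree vertices are also drawing on it. I expect this discharging case analysis — getting the rule constants to simultaneously clear every non-forbidden vertex configuration while not bankrupting the donors — to be the main obstacle; the Euler-formula setup and the reductions for degree $\leq 2$ vertices (already handled by the iterative method preceding this lemma) are routine.
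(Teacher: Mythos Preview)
Your outline correctly identifies discharging as the method, but it diverges from the paper in two important ways.

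First, the paper does \emph{not} carry out a fresh discharging argument. It invokes Borodin's 1993 theorem, which already proves the lemma with $(L_2')$ and $(L_2'')$ replaced by the single, stronger configuration $(L_2)$ (a weak edge from a $4$-vertex to a $7^-$-vertex). Borodin's rules assign charge $d-4$ to every $d$-vertex \emph{and} every $d$-face (total $-8$), then push $\tfrac{1}{3}$ from vertices to incident triangles, from $5^+$-faces to their $3$-vertices, and various amounts ($\tfrac{2}{3},\tfrac{1}{2},\tfrac{1}{3},\tfrac{1}{6}$) along edges toward low-degree endpoints depending on weak/semiweak/strong status. Because the paper only weakens $(L_2)$, every part of Borodin's verification survives except the analysis of $7$-vertices, and the paper's entire proof consists of rechecking that single case: a $7$-vertex $u$ starts with charge $+3$, and if it has a weak edge to a $4$-vertex $v$, then by the absence of $(L_2'')$ the two edges flanking $(u,v)$ at $u$ are not weak, and by $(L_2')$ and $(L_5)$ the vertices $v_-,v_+$ have degree $\geq 6$; a short tally shows $u$ stays nonnegative. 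That is the whole argument.

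Second, the initial charges you chose, $d(v)-6$ and $2d(f)-6$, are a legitimate Euler assignment but the wrong one for these thresholds. With your choice, triangles carry zero charge and cannot be sinks, so the entire deficit of a $3$-vertex (three units) must be met by neighbors and by $4^+$-faces; the numbers $10,8,6,7,5$ in $(L_1)$--$(L_6)$ were tuned to Borodin's $d-4$ scheme, where a $3$-vertex has deficit only $1$ (plus what it owes incident triangles). Re-deriving matching constants from your assignment is possible in principle but would amount to reproving Borodin's theorem from scratch with new rules---considerably more than what the lemma requires, and your proposal does not supply those rules. The efficient fix is to adopt the $d-4$ charges and Borodin's transfer rules verbatim, then confine your case analysis to the $7$-vertex, exactly as the paper does.
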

\begin{proof}
	Borodin~\cite{Borodin1993} proved this lemma, except with configurations $(L_2')$ and $(L_2'')$ replaced by $(L_2)$: a weak edge joining a 4-vertex to a $7^-$-vertex.
	We describe how to adapt Borodin's discharging argument to prove our stronger version. For full details, see the original paper~\cite{Borodin1993}.
	
	Initially, we assign a charge of $d - 4$ to each $d$-vertex and each $d$-face. By Euler's formula, this results in a total charge of $-8$. Then, following Borodin, we redistribute the charge as follows:
	\begin{itemize}
		\item Every face with more than $4$ sides transfers $\frac{1}{3}$ to every 3-vertex on its boundary.
		\item Every vertex transfers $\frac{1}{3}$ to each incident triangle.
		\item Each vertex $u$ transfers the following to the other endpoint $v$ of each incident edge:
		\begin{itemize}
			\item $\frac{2}{3}$ if $v$ has degree $3$ and $(u, v)$ is weak;
			\item $\frac{1}{2}$ if $v$ has degree $3$ and $(u, v)$ is semiweak;
			\item $\frac{1}{3}$ if $v$ has degree $3$ and $(u, v)$ is strong and $u$ has degree at least $6$;
			\item $\frac{1}{3}$ if $v$ has degree $4$ and $(u, v)$ is weak;
			\item $\frac{1}{6}$ if $v$ has degree $4$ and $(u, v)$ is semiweak;
			\item $\frac{1}{6}$ if $v$ has degree $5$ and $(u, v)$ is weak and $v$ is incident to four triangles.
		\end{itemize}
	\end{itemize}
	We now assume that $G$ does not contain any of the configurations $(L_1)$ through $(L_7)$, and show that this implies that every vertex and face has non-negative charge --- a contradiction. The only change from the original proof is that we cannot assume that weak edges between $4$-vertices and $7$-vertices do not exist. This only affects the part of the proof dealing with $7$-vertices, so if we can show that $7$-vertices still have non-negative charge, we are done.
	
	Consider any $7$-vertex $u$. Initially, $u$ has charge $+3$. If there is no weak edge connecting $u$ to a $4$-vertex, the original proof still applies, so suppose that $v$ is a neighboring $4$-vertex and $(u,v)$ is weak. Then $u$ transfers $\frac{1}{3}$ of its charge to $v$ and each of the two triangles incident to $(u,v)$, leaving it with $+2$ charge. Let $v_-$ and $v_+$ be the neighbors of $u$ preceding and following $v$ in clockwise order around $u$, respectively. Since $G$ does not contain configuration $(L_2'')$, neither $(u, v_-)$ nor $(u,v_+)$ is weak, so $u$ does not transfer any charge to the other faces incident to these edges. Furthermore, $v_-$ and $v_+$ must have degree at least $6$, otherwise their edge to $v$ would create configuration $(L_2')$ or $(L_5)$. Therefore they receive no charge from $u$ either.
	
	Even if the remaining faces all receive $\frac{1}{3}$ charge and the remaining vertices $\frac{1}{6}$, this would still leave $u$ with positive charge. By $(L_1)$ and $(L_4)$, no neighbor of $u$ can receive more than $\frac{1}{3}$ charge. If $u$ has another $4$-vertex $v'$ as neighbor with $(u,v')$ weak, this results in even less charge distribution, since the neighbors before and after $v'$ do not receive any charge and they cannot overlap with $v_+$ or $v_-$, since $(u, v_-)$ and $(u,v_+)$ are not weak. Finally, a $3$-vertex connected to $u$ by a strong edge would receive $\frac{1}{3}$ charge, but would prevent the adjacent faces from receiving charge. Thus, $u$ will have non-negative charge after redistribution, which completes the proof.
\end{proof}

\begin{figure}[htb]
	\captionsetup[subfigure]{labelformat=empty}
	\centering
	\begin{subfigure}[b]{0.24\textwidth}
		\centering
		\includegraphics{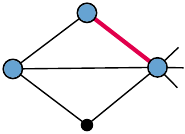}
		\caption{\bm{$(L_1)$}}
		\label{fig:L1}
	\end{subfigure}
	\begin{subfigure}[b]{0.24\textwidth}
		\centering
		\includegraphics{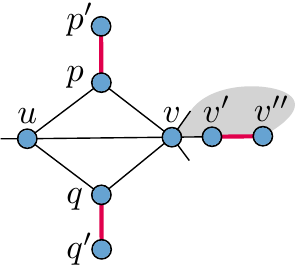}
		\caption{\bm{$(L_2')$}}
		\label{fig:L2-1}
	\end{subfigure}
	\begin{subfigure}[b]{0.24\textwidth}
		\centering
		\includegraphics{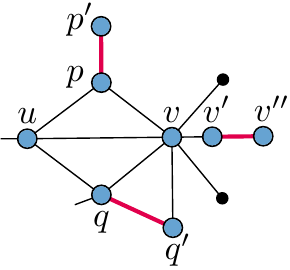}
		\caption{\bm{$(L_2'')$}}
		\label{fig:L2-2}
	\end{subfigure}
	\begin{subfigure}[b]{0.24\textwidth}
		\centering
		\includegraphics{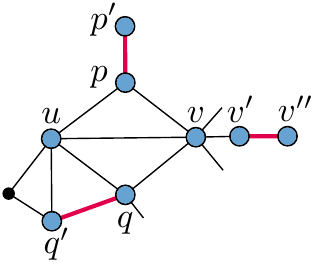}
		\caption{\bm{$(L_3)$}}
		\label{fig:L3}
	\end{subfigure}
	
	\vspace{1em}
	\begin{subfigure}[b]{0.24\textwidth}
		\centering
		\includegraphics{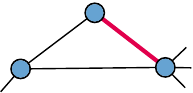}
		\caption{\bm{$(L_4)$}}
		\label{fig:L4}
	\end{subfigure}
	\begin{subfigure}[b]{0.24\textwidth}
		\centering
		\includegraphics{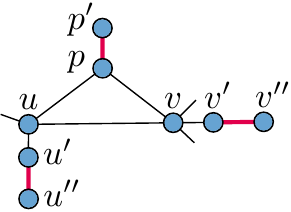}
		\caption{\bm{$(L_5)$}}
		\label{fig:L5}
	\end{subfigure}
	\begin{subfigure}[b]{0.24\textwidth}
		\centering
		\includegraphics{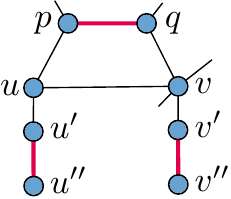}
		\caption{\bm{$(L_6)$}}
		\label{fig:L6}
	\end{subfigure}
	\begin{subfigure}[b]{0.24\textwidth}
		\centering
		\includegraphics{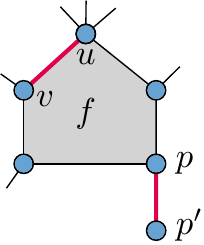}
		\caption{\bm{$(L_7)$}}
		\label{fig:L7}
	\end{subfigure}
	\caption{How to select $E'$ (thick shaded edges) and $V'$ (large shaded vertices) in each configuration of Lemma~\ref{lem:borodin}.}
	\label{fig:solve-borodin}
\end{figure}

With Lemma~\ref{lem:borodin} in hand, we can improve our bound to $\frac{3n}{8}$.

\begin{theorem}
	\label{thm:3n8}
	Every plane graph with $n \geq 3$ vertices can be guarded by at most $\frac{3n}{8}$ edges.
\end{theorem}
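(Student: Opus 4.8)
The plan is to run the iterative procedure of Section~\ref{sec:iterative} with target constant $c=\frac{3}{8}$. As observed just before Lemma~\ref{lem:lebesgue}, the arguments used for $2^-$-vertices in the proof of Theorem~\ref{thm:2-degenerate} already meet any ratio $c\ge\frac13$, so we may assume the current graph $G$ has minimum degree $3$ and hence, by Lemma~\ref{lem:borodin}, contains one of the configurations $(L_1)$–$(L_7)$. For each configuration I will produce a vertex set $V'$ and an edge set $E'$ such that every face incident to a vertex of $V'$ contains an endpoint of some edge of $E'$ and $|E'|\le\frac38|V'|$; removing $V'$, adding $E'$ to $\Gamma$ and recursing then yields a guard set of size at most $\frac{3n}{8}$.

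The basic device is simple. A single edge $(a,b)$ guards every face incident to $a$ or to $b$; therefore, if $w$ is a vertex all of whose incident faces meet $\{a,b\}$ — in particular any $3$-vertex forming a triangle with $a$ and $b$ — then $V'=\{a,b,w\}$ and $E'=\{(a,b)\}$ already meet the ratio, since $\frac13\le\frac38$. More generally, if $E'$ is a matching of size $m$ with vertex set $B$, then all faces incident to $B$ are guarded, and it suffices to enlarge $V'$ by at least $\lceil\frac{2m}{3}\rceil$ further \emph{free} vertices — vertices whose every incident face meets $B$, which holds as soon as no two cyclically consecutive neighbours of the vertex lie outside $B$, and in particular for every $3$-vertex with at least two neighbours in $B$. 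I will always take $E'$ to be a matching: one edge drawn from a triangle incident to the low-degree vertex of the configuration, plus, when one face of that vertex is still uncovered, one more edge that simultaneously covers that face and begins a short outward chain.

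Configurations $(L_1)$ and $(L_4)$ are then immediate: the weak (resp. semiweak) edge is incident to a triangle $uvx$ containing the $3$-vertex $u$, so $V'=\{u,v,x\}$, $E'=\{(v,x)\}$ works. For $(L_6)$ and $(L_7)$ I may assume no triangle is incident to the distinguished $3$-vertex/vertices (otherwise the previous case applies); for $(L_6)$ I use the four vertices of the $4$-face plus an outward edge, and for $(L_7)$ I pair up (a subset of) the four $3$-vertices of the pentagon by two matching edges and harvest the needed free vertices among the remaining pentagon vertices and their degree-$3$ neighbours. For the weak/semiweak configurations at a $4$- or $5$-vertex, namely $(L_2')$, $(L_2'')$, $(L_3)$, $(L_5)$, a triangle edge guards all but the face of $u$ "opposite" the special edge; a second matching edge covers that face, and the degree bound on the far endpoint $v$ ($\le 6$, $\le 7$, $\le 5$ respectively) together with the triangle apices supplies the required free vertices. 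All of these choices are the ones drawn in Figure~\ref{fig:solve-borodin}, and the bulk of the write-up is the (routine but case-by-case) verification that each one guards all incident faces and stays within the ratio.

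The main obstacle is exactly the group $(L_2')$, $(L_2'')$, $(L_3)$, $(L_5)$ (and to a lesser extent $(L_7)$): unlike a $3$-vertex on a triangle, a $4$- or $5$-vertex on a weak edge always has a face no triangle edge can reach, so one edge never suffices and we are forced into the "$2$ edges, $\ge 6$ vertices" regime, where the two extra free vertices must be found with no a priori degree bound on the triangle apices. This is precisely why Lebesgue's lemma is not enough and Borodin's refined list is used — in particular the split of $(L_2)$ into $(L_2')$ and $(L_2'')$, which constrains the neighbourhood of a weak edge incident to a $7$-vertex tightly enough that the free vertices are guaranteed. I expect the delicate bookkeeping to live entirely in these cases.
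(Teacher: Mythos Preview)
Your overall plan---iterate, reduce to minimum degree $3$, invoke Lemma~\ref{lem:borodin}, and handle each configuration---is exactly the paper's, and your treatment of $(L_1)$ and $(L_4)$ matches.

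Where your description goes wrong is the ``hard'' group $(L_2')$, $(L_2'')$, $(L_3)$, $(L_5)$ (and to a lesser extent $(L_6)$). You propose a two-edge scheme: one triangle edge through $u$, plus one more edge for $u$'s remaining face, after which $u$ together with $v$ or an apex becomes free. But with only four vertices in $B$ you cannot in general make $v$ free: in $(L_2')$, for instance, $v$ has up to six neighbours and hence up to four faces on the side away from $u$, none of which need touch your four chosen vertices; and an apex $p$ or $q$ has unbounded degree, so it cannot serve as a free vertex either. The paper does \emph{not} use a triangle edge here. It takes \emph{three} matching edges---one from each apex outward, $(p,p')$ and $(q,q')$, which already make $u$ free, and then a third edge $(v',v'')$ chosen along the boundary of whatever faces of $v$ remain---so that $u$ and $v$ are both free and $|E'|/|V'|=3/8$ exactly. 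The degree cap on $v$ is what guarantees a \emph{single} such third edge suffices (at most two unguarded faces of $v$ survive, and they are consecutive), and the split of $(L_2)$ into $(L_2')$/$(L_2'')$ is what keeps this true even at a $7$-vertex. Your $(L_7)$ sketch has the mirror-image problem: pairing the four $3$-vertices leaves the possibly high-degree fifth vertex with unguarded faces; the paper instead spends one of its two edges on that vertex and its pentagon neighbour.

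So the skeleton is right, but the ``two edges, two free vertices'' heuristic is too optimistic---the hard configurations genuinely require three edges and hit the $\tfrac{3}{8}$ ratio on the nose, with the free vertices being the two endpoints $u,v$ of the special edge rather than the apices.
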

\begin{proof}
	As before, we use the iterative method and assume that the minimum degree of our plane graph $G$ is 3. We describe how to find $E'$ and $V'$ for each configuration of Lemma~\ref{lem:borodin} (see Figure~\ref{fig:solve-borodin}).
	
	If $G$ contains $(L_1)$ or $(L_4)$, we consider a triangle incident to the (semi) weak edge and let $E'$ be the edge of the triangle that is not incident to the $3$-vertex. Then $V'$ consists of the $3$-vertex and both endpoints of the edge in $E'$. Thus, for the remainder of the proof, we can assume that any vertex incident to a triangle has degree at least $4$.
	
	If $G$ contains $(L_2')$, let $u$ be its $4$-vertex, $v$ be its $6^-$-vertex, and $p$ and $q$ be the other vertices of the triangles incident to $(u,v)$ (we leave these definitions implicit for the remaining cases; refer to Figure~\ref{fig:solve-borodin}). We consider a neighbor $p'$ of $p$. If $p$ has an edge to $q$, we let $E' = \{(p,q)\}$ and $V' = \{u, p, q\}$, so suppose that $p' \neq q$. Since $q$ has degree at least 4, it has a neighbor $q' \neq p'$. We add $(p,p')$ and $(q,q')$ to $E'$. If this guards all faces incident to $v$, we simply set $V' = \{u, v, p, p', q, q'\}$. Otherwise, let $v'$ be a neighbor of $v$ incident to all unguarded faces (there can be at most two, since $v$ is a $6^-$-vertex). Let $v'' \neq v$ be the other neighbor of $v'$ along the boundary of one of the unguarded faces incident to $v$. We know that $v'' \notin \{p, p', q, q', u\}$, otherwise the face would already have been guarded. Thus, we can add $(v', v'')$ to $E'$ and set $V' = \{u, v, p, p', q, q', v', v''\}$.
	
	If $G$ contains $(L_2'')$, we again set $E' = \{(p, q)\}$ with $V' = \{u, p, q\}$ if edge $(p, q)$ exists. Otherwise, let $q' \neq u$ be the other neighbor of $q$ adjacent around $v$. Since $p'$ has degree at least 4, it has a neighbor $p' \neq q'$. We add $(p, p')$ and $(q, q')$ to $E'$. If all faces incident to $v$ are guarded, we set $V' = \{u, v, p, p', q, q'\}$. Otherwise, we use the same reasoning as in the previous case to find an extra edge $(v', v'')$ that guards the remaining faces around $v$.
	
	If $G$ contains $(L_3)$, let $q' \neq v$ be the other neighbor of $q$ adjacent around $u$. Since $p$ has degree at least $4$, it either has a neighbor $p' \notin \{q, q'\}$, or it is connected to both $q$ and $q'$. In the first case, we add $(p, p')$ and $(q, q')$ to $E'$ and again find a third edge $(v', v'')$ to cover the remaining faces around $v$. In the second case, $v$ must have a neighbor $v' \neq q'$ otherwise these five vertices would form a $K_5$. Then we let $E' = \{(q, q'), (v, v')\}$ and $V' = \{u, p, q, q', v, v'\}$, since $(q, q')$ guards all faces incident to both $u$ and $p$ except for the triangle $uvp$.
	
	If $G$ contains $(L_5)$, let $u' \neq p$ be the other neighbor of $u$ adjacent to $v$ around $u$. If $p$ and $u'$ are connected by an edge, let $E' = \{(p, u')\}$ and $V' = \{u, p, u'\}$. Otherwise, let $u'' \notin \{u, v\}$ be a neighbor of $u'$ and let $p' \notin \{u, v, u''\}$ be a neighbor of $p$. These neighbors exist since $u'$ and $p$ have minimum degree $3$ and $4$, respectively. We add $(u', u'')$ and $(p, p')$ to $E'$ and, if necessary, find a third edge $(v', v'')$ to cover the remaining faces around $v$ as before. Thus, we get $E' = \{(p, p'), (u', u''), (v', v'')\}$ and $V' = \{u, u', u'', p, p', v, v', v''\}$.
	
	If $G$ contains $(L_6)$, either $u$ is connected to $q$ or it has a neighbor $u' \neq q$. In the first case, we let $E' = \{(p, q)\}$ and $V' = \{u, p, q\}$. In the second case, if $u'$ is connected to any vertex $x \in \{p, q, v\}$ then that edge would cover all faces around $u$ and give us $E' = \{(u', x)\}$ and $V' = \{u, u', x\}$. Otherwise, let $u'' \neq u$ be another neighbor of $u'$. We add $(p, q)$ and $(u', u'')$ to $E'$ and again find another edge to cover the remaining faces around $v$.
	
	Finally, if $G$ contains $(L_7)$, let $f$ be the 5-face and let $u$ be a vertex of maximum degree on $f$. Let $v$ be one of $u$'s neighbors around the face and let $p$ be the vertex on $f$ not adjacent to $u$ or $v$ around $f$. If $p$ has an edge to $u$ or $v$, then that edge covers all faces around $p$, $u$, and $v$ and we are done. Otherwise, let $p' \notin \{u, v\}$ be a neighbor of $p$ not on $f$. We set $E' = \{(u, v), (p, p')\}$ and $V' = V(f) \cup \{p'\}$.
	
	Thus, in each case we can find a set $E'$ and $V'$ such that $|E'| \leq \frac{3}{8}|V'|$.
\end{proof}

\section{Guarding by Coloring}
\label{sec:coloring}

Historically, many questions about guard placement have been resolved by finding an appropriate vertex or edge coloring. Bose~\etal~\cite{Bose2003} defined a \emph{face-respecting $k$-coloring} of a plane graph $G$ as a $k$-coloring of the vertices of $G$ such that no face is monochromatic. They were particularly interested in face-respecting 2-colorings with the additional property that every face has a monochromatic edge. For brevity, we call such colorings \emph{guard colorings}. They proved the following result, which we include here as a good introduction to the general technique.

\begin{lemma}[Bose~\etal~\cite{Bose2003}]
	If a plane graph with $n \geq 3$ vertices has a guard coloring, it can be guarded by $\frac{n}{3}$ edges.
\end{lemma}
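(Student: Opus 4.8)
The plan is to exploit the guard coloring directly: colors partition the vertices into two classes, say red and blue, and every face has a monochromatic edge. The idea is that if we pick, for each face, one such monochromatic edge and take the union over all faces, we get an edge guard set, but it could be too large. Instead, I would think of each color class as inducing a subgraph, and in each color class select a small set of edges that touches every monochromatic edge (equivalently, every face that is ``witnessed'' by a monochromatic edge of that color). The key observation is that a monochromatic edge together with the whole face it witnesses forms a structure where guarding reduces to a vertex-cover / dominating-type question on a much smaller object.

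Concretely, I would proceed as follows. First, fix a guard coloring with color classes $R$ and $B$. For each face $f$, designate one monochromatic edge $e_f$; say $f$ is \emph{red} if $e_f$ is red and \emph{blue} otherwise, breaking ties arbitrarily. It suffices to find an edge set of size $|R|/3$ guarding all red faces and, symmetrically, $|B|/3$ edges for the blue faces, since $|R| + |B| = n$. So the whole problem reduces to: given a plane graph and a set of distinguished monochromatic edges (all within one color class), guard the corresponding faces using at most (number of vertices in that color class)$/3$ edges. The second step is to handle one color class, say red. Consider the subgraph $H$ on the red vertices. Each red face $f$ contributes the red edge $e_f$, both of whose endpoints lie in $H$; choosing $e_f$ into our guard set guards $f$. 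Thus the faces we must cover are ``hit'' by choosing edges of $H$, and every red vertex we spend corresponds to progress; the constant $\tfrac13$ suggests that we want to argue that we can always select an edge of $H$ that guards its incident red faces while ``consuming'' three red vertices, or more precisely that the red faces can be covered by a matching-like structure in $H$ where each chosen edge is charged to three red vertices. I would likely set this up with the same iterative shrinking idea as in Section~\ref{sec:iterative}: repeatedly find a red edge $e_f = (a,b)$ witnessing some still-unguarded red face, take a third red vertex $c$ adjacent (in $H$) to $a$ or $b$, add $e_f$ to $\Gamma$, delete $\{a,b,c\}$ from $H$, and iterate; if no such third vertex exists the component is small and handled directly.

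The main obstacle I anticipate is the existence of the ``third red vertex'' $c$ and, more delicately, ensuring that deleting it does not leave some other red face unguarded that was relying on $c$'s red edge — i.e. controlling the interaction between the chosen monochromatic edges across faces. A red face could be incident to $c$ via its own designated red edge, so removing $c$ might orphan that face. The fix will be to be careful about which third vertex we pick (preferring vertices not incident to other designated red edges, or re-designating edges as we go), and to observe that low-degree situations in $H$ force the red faces into a configuration that can be guarded even more cheaply — reusing the degree-$0$/degree-$1$/degree-$2$ arguments from the proof of Theorem~\ref{thm:2-degenerate}, now applied inside $H$. Because $H$ is a plane graph (subgraph of $G$) and in particular $2$-degenerate would be too strong, I would instead directly argue that whenever every red face still needs guarding we can locate the $(a,b,c)$ triple, and that the monochromatic-edge structure guarantees each deletion of three red vertices kills at least one red face while not un-guarding previously handled ones, giving the $|R|/3$ bound; summing the two colors yields $n/3$ and completes the proof.
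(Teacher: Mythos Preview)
Your plan has a genuine gap at the decomposition step. You claim it suffices to guard the red faces with $|R|/3$ edges and the blue faces with $|B|/3$ edges, but these individual bounds can fail. Take a single triangle with two red vertices $a,b$ and one blue vertex $c$: both faces (inner and outer) have $(a,b)$ as their only monochromatic edge, so both are red, yet $|R|/3 = 2/3 < 1$. More generally, the red subgraph $H$ can consist of isolated edges --- a red vertex may have exactly one red neighbour and all other neighbours blue --- so the ``third red vertex $c$ adjacent in $H$ to $a$ or $b$'' need not exist, and when the component is just the edge $(a,b)$ you must spend one guard for only two red vertices. Your fallback ``handled directly'' cannot rescue the ratio here.

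There is also a structural reason your approach stalls: you only ever use the monochromatic-edge property of the guard coloring, never the fact that no face is monochromatic. The paper's argument uses both. It takes maximal matchings $M_1, M_2$ in the two color-induced subgraphs and builds \emph{three} guard sets: $M_1 \cup M_2$ (which guards every face because each face has a monochromatic edge, hence an endpoint in some $M_i$ by maximality), and for $i=1,2$ the set $M_i$ together with one extra edge per unmatched vertex of $G_i$ (which guards every face because no face is monochromatic, so each face meets $G_i$). The sizes are $|M_1|+|M_2|$, $|V(G_1)|-|M_1|$, and $|V(G_2)|-|M_2|$, summing to exactly $n$; the smallest is at most $n/3$. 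The key idea you are missing is this averaging over three guard sets with telescoping sizes, rather than a per-color iterative charging.
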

\begin{proof}
	Consider two subgraphs $G_1$ and $G_2$ of $G$, induced by the two color classes of the guard coloring. Let $M_1$ be a maximal matching in $G_1$ and $M_2$ in $G_2$. Now consider a face $f$ that has a boundary edge $e$ with both endpoints in $G_1$. Since $M_1$ is maximal, if it does not contain $e$, it must contain one of its endpoints. Otherwise, we would obtain a larger matching by adding $e$. Thus, in each case, $M_1$ guards $f$. Recall that one of the properties of a guard coloring is that every face has a monochromatic edge. This implies that $M_1 \cup M_2$ is a guard set for $G$.
	
	We now have one guard set for $G$, but we do not have a good bound on the size of this guard set. Indeed, there are examples where $M_1 \cup M_2$ contains many more than $\frac{n}{3}$ edges. To prove the lemma, we find two more guard sets for $G$ such that the total size of all three guard sets is $n$. Then the smallest of these three sets must have size at most $\frac{n}{3}$.
	
	Our second guard set starts with all edges of $M_1$, and then adds one edge incident to each vertex of $G_1$ that is not in $M_1$. Since our guard coloring has no monochromatic faces, each face has a vertex in $G_1$. Thus, this set is also a guard set for $G$. We obtain our third guard set by repeating this construction for $M_2$.
	
	The size of the first guard set is $|M_1| + |M_2|$. The other guard sets have size $|M_1| + |V(G_1)| - 2|M_1| = |V(G_1)| - |M_1|$, and $|V(G_2)| - |M_2|$, respectively. Thus, in the total size the size of the matchings cancels and we are left with $|V(G_1)| + |V(G_2)| = n$.
\end{proof}

Bose~\etal also showed that every plane graph without quadrilateral faces has a guard coloring. Thus, a natural question is whether \emph{all} plane graphs --- even those with quadrilateral faces --- have a guard coloring? In the following theorem we show that this is not the case.

\begin{figure}[htb]
	\centering
	\includegraphics{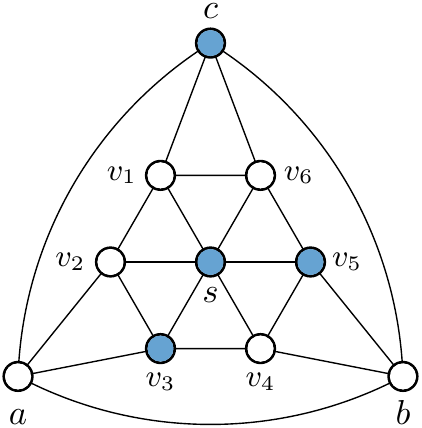}
	\caption{A plane graph without a guard coloring. The illustrated 2-coloring is forced under the assumption that $a$ and $b$ have the same color, but leaves quadrilateral $bv_5v_6c$ without a monochromatic edge.}
	\label{fig:no-guard-coloring}
\end{figure}

\begin{theorem}
	There are plane graphs that have no guard coloring.
\end{theorem}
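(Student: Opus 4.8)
The plan is to exhibit a specific plane graph and argue, by a short case analysis, that no $2$-coloring of its vertices can simultaneously avoid monochromatic faces and give every face a monochromatic edge. The graph in Figure~\ref{fig:no-guard-coloring} is the natural candidate: it should be built around a small ``gadget'' that forces two designated vertices $a$ and $b$ to receive the same color, together with a quadrilateral face $b v_5 v_6 c$ whose remaining constraints then become contradictory. So the first thing I would do is fix notation for the vertices of this graph and list the faces, distinguishing the triangular faces (which each automatically have a monochromatic edge under \emph{any} $2$-coloring, by pigeonhole) from the quadrilateral faces (which are the only ones that can fail the guard-coloring property).

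The core of the argument splits on the colors of $a$ and $b$. In the case $a$ and $b$ have the \emph{same} color, I would show that the coloring of the entire gadget is forced edge by edge: the requirement that each triangular face be non-monochromatic propagates a unique color to each of $v_1, v_2, \dots$ in turn (this is exactly the ``illustrated 2-coloring is forced'' claim in the figure caption), and this forced coloring leaves the quadrilateral $b v_5 v_6 c$ with its four vertices colored so that no boundary edge is monochromatic --- violating the guard-coloring condition. In the case $a$ and $b$ have \emph{different} colors, I would use a symmetric sub-gadget (or the same propagation run from the other end) to derive either a monochromatic triangle or, again, a quadrilateral with no monochromatic edge. Each of these propagation steps is a one-line deduction of the form ``face $xyz$ non-monochromatic and $x,y$ already colored $\Rightarrow z$ gets the third option'', which for two colors forces $z$ to differ from one specific neighbor; iterating around the fan of triangles pins everything down.

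The step I expect to be the main obstacle is \emph{designing} the graph so that both branches of the case analysis genuinely close, rather than verifying a given one. The delicate point is that avoiding a monochromatic quadrilateral face is cheap (a quadrilateral $wxyz$ is non-monochromatic for almost all colorings), so the contradiction must come entirely from the stronger monochromatic-edge requirement, and this requirement interacts globally: a color forced on a vertex to fix one quadrilateral may be exactly the color that breaks a triangle elsewhere. Concretely, I would arrange a chain of triangles sharing edges (a ``triangulated strip'') between $a$ and $b$ so that same-color forces an alternating pattern, place the bad quadrilateral $b v_5 v_6 c$ at the end of this chain with $c$ attached so that its color is pinned by an adjacent triangle, and then check that the two colors forced onto $v_5$ and $v_6$ by the chain coincide with the colors of $b$ and $c$ respectively in a way that makes all four quadrilateral edges bichromatic. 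Once the figure is correct, writing up the proof is just: (1) triangles are always fine; (2) WLOG $a,b$ same color, trace the forced coloring; (3) observe $b v_5 v_6 c$ fails; (4) note the other case is handled symmetrically. I would keep the write-up terse and lean on the figure for the vertex labels and the forced colors.
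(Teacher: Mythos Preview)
Your plan follows the paper's approach, but two points deserve correction.

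First, the case split on whether $a$ and $b$ share a color is unnecessary, and your handling of the ``different colors'' branch is the vaguest part of the proposal. In the paper's graph the outer face is the \emph{triangle} $abc$, so by pigeonhole two of its three vertices must share a color; a symmetry of the graph then lets one assume without loss of generality that these are $a$ and $b$. There is no separate ``$a$ and $b$ differ'' case to close and no second sub-gadget to design.

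Second, you describe the forcing as proceeding purely through triangles (``face $xyz$ non-monochromatic and $x,y$ already colored $\Rightarrow z$ determined''), but in the actual construction quadrilaterals also force colors, via the \emph{other} guard-coloring condition. For instance, once $a$ is white, $c$ blue, and $v_1$ white, the quadrilateral $acv_1v_2$ forces $v_2$ to be white: not to avoid monochromaticity, but because otherwise no boundary edge of that quadrilateral would be monochromatic. The propagation alternates between the two conditions rather than using only the first. With those two fixes your outline matches the paper's proof; the design obstacle you flag is already resolved by the given figure, so you should trace the forced coloring through it rather than re-engineer the gadget.
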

\begin{proof}
	Consider the graph in Figure~\ref{fig:no-guard-coloring}. We need to color its vertices with two colors, say white and blue, such that every face contains (i) vertices of both colors and (ii) an edge whose endpoints have the same color. We show that such a coloring does not exist.
	
	Suppose, for a contradiction, that it does. Since the outer face is a triangle, two of its vertices must have the same color, say white. Suppose that the two vertices are $a$ and $b$; the other cases are symmetric.
	This forces $c$ to be blue, since otherwise triangle $abc$ would be monochromatic. Now either $v_1$ or $v_6$ needs to be white, otherwise triangle $cv_1v_6$ is monochromatic.
	Since the graph is symmetric, we suppose without loss of generality that $v_1$ is white.
	This forces $v_2$ to be white as well, otherwise quadrilateral $acv_1v_2$ would not have a monochromatic edge. This, in turn, forces $s$ and $v_3$ to be blue, since they are part of triangles with two white vertices. Now a sequence of such triangles forces $v_4$ to be white, $v_5$ blue, and $v_6$ white. But this leaves quadrilateral $bv_5v_6c$ without a monochromatic edge.
	Since the entire coloring was forced, this graph has no guard coloring.
\end{proof}

Note that this counter-example does not require a large guard set: $\frac{n}{5} = 2$ edges suffice. Thus, it only shows that the technique of guard colorings does not extend to all plane graphs.

Everett and Rivera-Campo~\cite{Everett1997} used a different vertex coloring to find small guard sets. We modify their approach here to give an upper bound that improves on the $\frac{n}{3} + \alpha$ bound by Bose~\etal~\cite{Bose2003}.

\begin{theorem}
	\label{thm:a/9}
	Every plane graph with $n \geq 3$ vertices and $\alpha$ quadrilateral faces can be
	guarded by at most $\frac{n}{3} + \frac{\alpha}{9}$ edges.
\end{theorem}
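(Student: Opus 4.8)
The plan is to follow Everett and Rivera-Campo's coloring strategy but to exploit a more refined structural analysis of the quadrilateral faces, so that the penalty per quadrilateral drops from $1$ to $\tfrac19$. The starting point is to find a vertex coloring (using more than two colors if necessary, or a carefully chosen $2$-coloring) in which every \emph{non-quadrilateral} face has a monochromatic edge, and every face — including quadrilaterals — is non-monochromatic. The graphs that make a plain guard coloring impossible are exactly those forced by long chains of quadrilaterals (as in Figure~\ref{fig:no-guard-coloring}); the idea is that such forcing chains are rare and localized, so each ``bad'' quadrilateral (one without a monochromatic edge in the chosen coloring) can be charged against enough vertices to absorb its cost.

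Concretely, I would: (1) fix a face-respecting $2$-coloring $V = V_1 \cup V_2$ that maximizes the number of faces with a monochromatic edge (or, alternatively, produce one greedily and then argue about which quadrilaterals can fail); (2) form maximal matchings $M_1, M_2$ in $G[V_1], G[V_2]$; as in the Bose~\etal lemma, $M_1 \cup M_2$ guards every face that has a monochromatic edge, so the only unguarded faces are quadrilaterals whose four boundary edges are all bichromatic, i.e.\ quadrilaterals colored alternately around the cycle; (3) for each such alternating quadrilateral, add one extra edge of the quadrilateral to the guard set to cover it — this costs at most one edge per bad quadrilateral, but naively that is the old $\tfrac{n}{3}+\alpha$ bound. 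The improvement must come from the averaging step: as in the earlier lemma, build three guard sets whose sizes sum to $n$ plus a controlled overcount, and show the overcount is at most $\tfrac{\alpha}{3}$ rather than $3\alpha$, so the smallest of the three has size at most $\tfrac{n}{3}+\tfrac{\alpha}{9}$. For this I would build the three sets so that each bad quadrilateral only forces an extra edge in \emph{one} of the three sets, or forces extra edges totalling at most $1$ across all three sets on average.

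The key structural input I expect to need is a bound showing that the bad (alternating) quadrilaterals cannot be too densely packed relative to the vertex count, or that any coloring can be chosen so that the number of alternating quadrilaterals is small, together with a local recoloring/re-charging argument. The natural mechanism: if a quadrilateral $q = v_1v_2v_3v_4$ is alternately colored, flipping the color of one vertex either makes $q$ good or pushes the problem to a neighboring face; by choosing the coloring to be locally optimal one forces any remaining alternating quadrilateral to be ``surrounded'' by structure (triangles forcing the alternation, as in the counterexample), and those surrounding triangles supply the extra vertices needed to pay for the guard edge at rate $\tfrac19$. I would then combine the three-guard-set averaging trick with the observation that the extra edges needed for bad quadrilaterals, summed over all three sets, is at most $\alpha/3$, giving the bound.

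\textbf{Main obstacle.} The hard part will be the second ingredient: proving that in a suitably chosen coloring each bad quadrilateral can be ``paid for'' by a bounded region of $\Omega(1)$ private vertices — equivalently, that the bad quadrilaterals together with their forcing structure have total size large enough that dividing by $9$ suffices. Getting the constant exactly $\tfrac19$ (rather than some weaker constant) will require carefully matching the recoloring argument to the three-fold averaging so that an alternating quadrilateral contributes its extra edge to only one of the three guard sets and is amortized against at least three vertices there — and ensuring these charged vertices are not reused across different bad quadrilaterals. Handling the boundary cases (the outer face, quadrilaterals sharing edges or vertices with other quadrilaterals, and short connections between quadrilaterals) is where the bookkeeping will be most delicate; the final remark in the abstract — that $\tfrac{n}{3}$ suffices when no two quadrilaterals are connected by a short path — suggests the intended argument indeed isolates each quadrilateral's forcing chain and pays for it locally.
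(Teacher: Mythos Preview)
Your proposal is not a proof but a sketch with a central step left open, and the approach diverges from the paper's in a way that makes the $\tfrac19$ constant inaccessible. You work with a single face-respecting $2$-coloring and the three guard sets from the Bose--Kirkpatrick--Li lemma (two of the form ``matching $+$ leftover edges'' and one of the form $M_1\cup M_2$), hoping to bound the total fix-up for alternately-colored quadrilaterals by $\alpha/3$. But you give no mechanism for this: the counterexample you cite shows that a $2$-coloring can be \emph{forced} into an alternating pattern on a quadrilateral, and there is no reason a single $2$-coloring can make the number of alternating quadrilaterals as small as $\alpha/3$. Your ``local recoloring / private-vertex charging'' idea is speculative, and even if it worked it would presumably yield some constant, not the sharp $\tfrac19$.

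The paper's argument is structurally different and explains the $9$ directly. It four-colors a carefully built triangulation $G'$ (in every $k$-face with $k\ge 6$ one first inserts the triangle $v_1v_3v_5$). The four colors give six bichromatic induced subgraphs $G_{ij}$ with maximal matchings $M_{ij}$, yielding six guard sets $\Gamma_{ij}$ of total size $3n-\sum_{ij}|M_{ij}|$. In addition, the three complementary unions $M_{12}\cup M_{34}$, $M_{13}\cup M_{24}$, $M_{14}\cup M_{23}$ each guard every non-quadrilateral face (the inserted triangle $v_1v_3v_5$ forces a monochromatic edge under each of the three induced $2$-colorings), and a short case analysis shows every quadrilateral is guarded by at least two of these three unions. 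Patching the at most $\alpha$ misses across the three unions costs $\le\alpha$ extra edges in total, so the three patched unions have combined size $\sum_{ij}|M_{ij}|+\alpha$. Altogether nine guard sets of total size $3n+\alpha$, and the smallest has size at most $\tfrac{n}{3}+\tfrac{\alpha}{9}$. The denominator $9$ is simply the number of guard sets; no charging or recoloring is needed.
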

\begin{proof}
	We first construct a triangulation $G'$ by inserting extra diagonals in every non-triangular face of $G$, with two restrictions. First, we do not insert edges that are already in $G$. Second, for every $k$-face with $k \geq 6$ and boundary $v_1,v_2,\dots, v_k,v_1$, we first add the three edges $v_1v_3$, $v_3v_5$, and $v_5v_1$ (see Figure~\ref{fig:triangulate-6f}). By the four-color theorem~\cite{MR1025335}, we can find a proper coloring of $G'$ with a set of four colors $\{c_1, c_2, c_3, c_4\}$. Consider one such coloring, and note that it is also a proper coloring of $G$.
	
	Since each face of $G$ was triangulated in $G'$, its vertices have at least three distinct colors. Thus, if we consider any two colors, say $c_1$ and $c_2$, each face has a vertex with at least one of these two colors. In other words, each face of $G$ contains a vertex of $G_{12}$, the subgraph of $G$ induced by the vertices with color $c_1$ or $c_2$. This means we can create a guard set for $G$ by finding a set of edges whose endpoints include all vertices of $G_{12}$. We do this by finding a maximal matching $M_{12}$ of $G_{12}$, then adding one extra edge incident to each vertex of $G_{12}$ not in $M_{12}$. We call the resulting guard set $\Gamma_{12}$, and note that it contains $|\Gamma_{12}| = |V(G_{12})| - |M_{12}|$ edges, since each edge of $M_{12}$ covers two vertices in $G_{12}$. We can do this for each combination of two colors, giving us six different guard sets.
	
	\begin{figure}[htb]
		\centering
		\begin{subfigure}[b]{0.48\textwidth}
			\centering
			\includegraphics{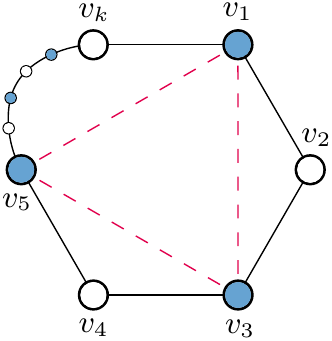}
			\caption{}
			\label{fig:triangulate-6f}
		\end{subfigure}
		\begin{subfigure}[b]{0.48\textwidth}
			\centering
			\includegraphics{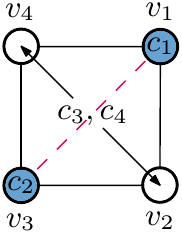}
			\caption{}
			\label{fig:triangulate-4f}
		\end{subfigure}
		\caption{A triangulation and coloring of the faces of $G$. The red dashed edges are added when triangulating (a) a face with six or more sides and (b) a quadrilateral.}
	\end{figure}
	
	Now consider the set $\Gamma_{1234} = M_{12} \cup M_{34}$. We show that this is a guard set for all non-quadrilateral faces of $G$. First, suppose that some face has an edge $e$ whose endpoints have colors $c_1$ and $c_2$. If neither endpoint of $e$ is in $M_{12}$, we can add $e$ to $M_{12}$ to obtain a larger matching. But $M_{12}$ is maximal, so it must already contain some edge incident to an endpoint of $e$. Thus, $M_{12}$ guards all faces with a $(c_1, c_2)$-edge. We claim that every non-quadrilateral face of $G$ has either a $(c_1, c_2)$-edge, or a $(c_3, c_4)$-edge and is therefore guarded by $\Gamma_{1234}$. To show this, we group colors $c_1$ and $c_2$ into one color class $c_A$ and $c_3$ and $c_4$ into $c_B$. Our claim is equivalent to saying that every non-quadrilateral face has a monochromatic edge in this two-coloring. This is clear for faces of odd length, since they cannot be properly two-colored.
	
	Let $f$ be a $k$-face with $k \geq 6$ and with boundary $v_1, \dots, v_k$ (see Figure~\ref{fig:triangulate-6f}). To avoid a monochromatic edge, the colors $c_A$ and $c_B$ must alternate along the boundary. This means that $v_1$, $v_3$, and $v_5$ get the same color. But these form a triangle in $G'$, since we started triangulating this face by inserting the edges $v_1v_3$, $v_3v_5$, and $v_5v_1$. Thus, they must have three distinct colors in the four-coloring, which means they cannot have the same color in the two-coloring. Therefore $\Gamma_{1234}$ guards all non-quadrilateral faces. An analogous argument shows that the same holds for $\Gamma_{1324} = M_{13} \cup M_{24}$ and $\Gamma_{1423} = M_{14} \cup M_{23}$.
	
	What about quadrilateral faces? Let $q$ be a quadrilateral face with boundary $v_1, v_2, v_3, v_4$ and suppose that it was triangulated by adding $v_1v_3$ (see Figure~\ref{fig:triangulate-4f}). We show that at least two of $\Gamma_{1234}$, $\Gamma_{1324}$, and $\Gamma_{1423}$ guard $q$. Suppose that $q$ is not guarded by $\Gamma_{1234}$, which means that it does not have $(c_1, c_2)$-edges, or $(c_3, c_4)$-edges. Without loss of generality, assume that $v_1$ has color $c_1$. Then the two-coloring argument and the presence of edge $v_1v_3$ force $v_3$ to have color $c_2$, while $v_2$ and $v_4$ have color $c_3$ or $c_4$. Either way, there is both a $(c_1, c_3)$- or $(c_2, c_4)$-edge and a $(c_1, c_4)$- or $(c_2, c_3)$-edge. By symmetry, this means that if one of the three does not guard $q$, the other two do. We complete $\Gamma_{1234}$ to a guard set by adding, for each quadrilateral $q$ not guarded by $\Gamma_{1234}$, one edge incident to $q$, and likewise for $\Gamma_{1324}$ and $\Gamma_{1423}$. The total size of these three guard sets is $|M_{12}| + |M_{34}| + |M_{13}| + |M_{24}| + |M_{14}| + |M_{13}| + \alpha$.
	
	We now have nine guard sets for $G$. The total number of edges in these sets is $3n + \alpha$, since each vertex occurs in three of the $G_{ij}$, and the size of the matchings cancels. Thus, the smallest of these sets has size at most $\frac{3n + \alpha}{9} = \frac{n}{3} + \frac{\alpha}{9}$.
\end{proof}

\section{Distant Quadrilaterals}
\label{sec:3-hop}

In this section, we combine both methods used previously to prove a better upper bound for plane graphs in which every pair of quadrilaterals is far apart. To make this more precise, we say that two faces $f$ and $g$ are \emph{$h$-hop apart} if every path from a vertex on the boundary of $f$ to a vertex on the boundary of $g$ contains at least $h$ edges.

\begin{theorem}
	\label{thm:3-hop}
	Every plane graph with $n \geq 3$ vertices in which every two quadrilateral faces are $3$-hop apart can be guarded by at most $\frac{n}{3}$ edges.
\end{theorem}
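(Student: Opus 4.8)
The plan is to combine the iterative peeling strategy of Section~\ref{sec:iterative} with the guard-coloring machinery of Section~\ref{sec:coloring}. The starting observation is that the $3$-hop hypothesis makes quadrilaterals isolated: if $q\neq q'$ are quadrilateral faces, then the sets $N[V(q)]$ and $N[V(q')]$ (each consisting of the corners of the quadrilateral together with all their neighbours) are disjoint, since a common vertex would give a path of length at most $2$ between $q$ and $q'$. Moreover, no face sharing an edge with $q$ can itself be a quadrilateral, because such a face would share a vertex with $q$. So around any quadrilateral face the graph is quadrilateral-free apart from $q$ itself.

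I would run the iterative method with target ratio $c=\tfrac13$. Vertices of degree at most $2$ are removed exactly as in Theorems~\ref{thm:2-degenerate} and~\ref{thm:2n5}. The new ingredient is a reduction step that removes a quadrilateral face $q=v_1v_2v_3v_4$ (invoked when the current graph has minimum degree $3$) using $|E'|\le\tfrac13|V'|$. One always starts by putting an edge of $q$, say $(v_1,v_2)$, into $E'$: this guards $q$, every face incident to $v_1$ or $v_2$, and the two faces sharing $(v_2,v_3)$ and $(v_4,v_1)$. A second (and, in some configurations, a third) edge is then selected so that $E'$ guards all faces incident to a set $V'\supseteq V(q)$ of at least $3|E'|$ vertices. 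The leverage is twofold: an edge from a corner of $q$ to a high-degree neighbour $z$ guards \emph{all} faces at $z$ at once, so high-degree corners or neighbours let us remove more vertices; and a degree-$2$ vertex adjacent to an already-guarded vertex has both of its faces guarded, so such vertices can be folded into $V'$ for free. Which mechanism applies is decided by a case analysis on the degrees of $v_1,\dots,v_4$ and on which of their neighbours coincide. Crucially, since $N[V(q)]$ is disjoint from every other quadrilateral's neighbourhood, the vertices and edges chosen for distinct quadrilaterals never interact.

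When the iteration can no longer find a vertex of degree at most $2$ or a quadrilateral face, the surviving graph $G''$ is quadrilateral-free, so by Bose~\etal's theorem it has a guard coloring, and hence (by their lemma, applied when $|V(G'')|\ge3$, the remaining tiny cases being immediate) a guard set of size $|V(G'')|/3$. Adding this to the edges accumulated during the peeling yields a guard set of size at most $n/3$.

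The delicate part is the quadrilateral-elimination step: proving that $\tfrac13$ is always attainable appears to need several cases distinguished by the adjacency pattern and degrees around $q$, with a short verification of $|E'|\le\tfrac13|V'|$ in each. A secondary subtlety is that removing a vertex can shrink a pentagon into a quadrilateral sitting close to a still-present quadrilateral; this must be controlled, either by showing that a newly created quadrilateral is again eliminated by the same step, or by ordering the reductions so that a quadrilateral and its entire neighbourhood are removed before any incident larger face can collapse to size $4$.
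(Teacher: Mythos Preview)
Your approach diverges from the paper's, and the divergence is exactly where the gap lies. The paper does \emph{not} try to eliminate quadrilaterals by local $|E'|\le\tfrac13|V'|$ reductions. After peeling vertices of degree at most $2$ (with care so that any quadrilateral created during peeling is guarded on the spot and only the originally marked quadrilaterals remain relevant), it keeps all surviving quadrilaterals and refines the coloring argument of Theorem~\ref{thm:a/9}. The refinement is in how $M_{12}$ and $M_{34}$ are built: before extending them to maximal matchings, each is \emph{seeded} with one edge per marked quadrilateral, chosen via a tailored triangulation so that the seed edge lies in $G_{12}$ or $G_{34}$ and guards that quadrilateral. Then $M_{12}\cup M_{34}$ guards every face, quadrilaterals included, and is a third guard set alongside $\Gamma_{12}$ and $\Gamma_{34}$. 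The $3$-hop hypothesis is invoked exactly once: since every seed edge is incident to its quadrilateral, two seed edges sharing an endpoint would give a $2$-hop path between two marked quadrilaterals, so the seeds are pairwise disjoint and form a valid partial matching. Three guard sets of total size $n_2$ then give one of size at most $n_2/3$.

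Your proposal rests instead on a local quadrilateral-elimination step with ratio $\tfrac13$, which you do not carry out, and which runs into trouble in the natural hard case. Take $q=v_1v_2v_3v_4$ with every $v_i$ of degree $3$ (outside neighbour $w_i$), the $w_i$ pairwise distinct, and no triangle sharing an edge with $q$. Your suggested start $E'=\{(v_1,v_2),(v_3,w_3)\}$ guards all faces at $v_1,\dots,v_4,w_3$, but the third face at each of $w_1,w_2,w_4$ need not be guarded, so $|V'|=5$ and the ratio is $2/5$. To reach $\tfrac13$ with two edges you need a sixth removable vertex; with three edges, nine. Yet each additional edge generically contributes only its two endpoints to $V'$, and there are no degree-$2$ vertices to ``fold in for free'' since the graph has minimum degree $3$ at this stage. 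The case analysis you defer is thus the entire content of the theorem under your scheme, and nothing in the proposal indicates how it would close this gap. Your secondary worry---that removals can create new, close-together quadrilaterals---is also real; the paper sidesteps it by confining the iterative phase to degree-$\le 2$ peeling and handling newly created quadrilaterals there, rather than ever attempting to excise a quadrilateral locally.
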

\begin{proof}
	We first use the iterative algorithm as described in the proof of Theorem~\ref{thm:2-degenerate} to remove any vertices of degree less than $3$. We have to be a little careful here, since removing these vertices could introduce a new quadrilateral face that is not 3-hop apart from existing quadrilaterals. To remedy this, we first mark all quadrilateral faces in the original graph. Now, if removing a vertex $v$ of degree 1 would introduce a new quadrilateral face, we instead consider its neighbor $u$ and another of $u$'s neighbors $w \neq v$ (these vertices must exist if removing $v$ would introduce a new quadrilateral). We then add $(u, w)$ to our partial guard set $\Gamma_1$ and remove all three vertices. This guarantees that all newly introduced quadrilaterals are guarded by $\Gamma_1$, since we already do the same for vertices of degree 2.
	
	If the graph was 2-degenerate, we are now done. Otherwise, this results in a graph $G$ with minimum degree at least $3$ and a partial guard set $\Gamma_1$ of size at most $\frac{n_1}{3}$, where $n_1$ is the number of vertices removed. We proceed to find a guard set $\Gamma_2$ for $G$ of size at most $\frac{n_2}{3}$, where $n_2$ is the number of vertices in $G$. The final guard set is $\Gamma_1 \cup \Gamma_2$ and has size at most $\frac{n_1}{3} + \frac{n_2}{3} = \frac{n}{3}$. Since removing vertices cannot decrease the hop distance between two faces, all marked quadrilaterals in $G$ are still 3-hop apart.
	
	We now turn to the coloring method from Theorem~\ref{thm:a/9} to find a guard set for $G$. However, we take greater care with quadrilateral faces in triangulating $G$ and constructing the matchings $M_{12}$ and $M_{34}$, to ensure that $M_{12} \cup M_{34}$ actually guards every face of $G$ instead of just the non-quadrilateral faces. Together with $\Gamma_{12}$ and $\Gamma_{34}$, this then gives us three guard sets of total size $n_2$, which means the smallest of the three has size at most $\frac{n_2}{3}$.
	
	\begin{figure}[htb]
		\centering
		\includegraphics{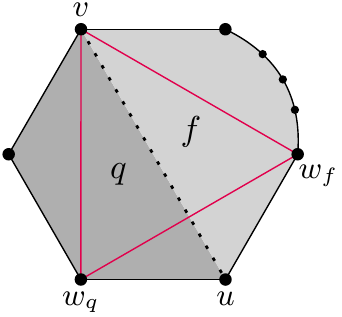}
		\caption{Triangulating the face resulting from merging quadrilateral $q$ with a neighboring face.}
		\label{fig:triangulate-q}
	\end{figure}
	
	We construct a triangulation $G'$ from $G$ as in the proof of Theorem~\ref{thm:a/9}, with one exception. If a quadrilateral $q$ does not share a boundary edge with a triangle, we merge it with one of its neighboring faces $f$ by removing the edge $(u, v)$ separating them (see Figure~\ref{fig:triangulate-q}). The result is a face with at least $7$ sides, since $f$ was not a triangle and all quadrilaterals are further apart. Let $w_f \neq v$ be the other neighbor of $u$ along the boundary of $f$, and $w_q \neq v$ the other neighbor of $u$ along the boundary of $q$. We insert edges $(v, w_f)$, $(v, w_q)$, and $(w_f, w_q)$, then triangulate the rest of the face as usual.
	
	Next, we four-color $G'$ and consider the resulting coloring of $G$. Note that the edges we removed could be monochromatic, but this is not a problem. Let $G_{12}$ and $G_{34}$ be the subgraphs of $G$ induced by all vertices with colors in $\{c_1, c_2\}$ and $\{c_3, c_4\}$, respectively. First, suppose $M_{12}$ is an arbitrary maximal matching in $G_{12}$ and $M_{34}$ in $G_{34}$. Since each face of $G$ contained a triangle in $G'$, it has vertices of at least three different colors. Therefore we still obtain guard sets $\Gamma_{12}$ and $\Gamma_{34}$ by taking the matchings and adding an edge incident to every vertex of the right colors not in the corresponding matching. Similarly, as argued in the proof of Theorem~\ref{thm:a/9}, $M_{12} \cup M_{34}$ guards all non-quadrilateral faces of $G$. We now show how to pick initial edges for $M_{12}$ and $M_{34}$ such that $M_{12} \cup M_{34}$ also guards the marked quadrilateral faces of $G$. Recall that the unmarked quadrilateral faces of $G$ are already guarded by $\Gamma_1$.
	
	Initially, $M_{12}$ and $M_{34}$ are empty. If a marked quadrilateral $q$ shares a boundary edge with a triangle $t$, then the vertices of $t$ have three distinct colors. Therefore one of the edges of $t$ must belong to $G_{12}$ or $G_{34}$, and we add this edge to the corresponding matching. If $q$ does not share an edge with a triangle, we merged it with a neighboring face by removing edge $(u, v)$. Suppose that $u$ has a color in $\{c_1, c_2\}$. Since three of its neighbors in $G$ --- $v$, $w_f$, and $w_q$ --- formed a triangle in $G'$, one of them must also have a color in $\{c_1, c_2\}$, and we add this edge to $M_{12}$. If $u$ has a color in $\{c_3, c_4\}$, we add the corresponding edge to $M_{34}$.
	
	Thus, we seed $M_{12}$ and $M_{34}$ with edges that together guard all marked quadrilateral faces of $G$. We then complete these sets to maximal matchings by greedily adding edges of $G_{12}$ and $G_{34}$, respectively. This makes $M_{12} \cup M_{34}$ a third guard set. The only thing left to argue is that none of the seed edges share an endpoint. This is guaranteed by the 3-hop distance between marked quadrilaterals in $G$; since each seed edge is incident to a marked quadrilateral, two seed edges sharing an endpoint would give a 2-hop path between two marked quadrilateral faces.
\end{proof}

\section{Conclusion}
\label{sec:conclusion}

Our main contribution lies in the development of techniques that allowed us to improve the upper bound on the number of edge guards that suffice to guard a plane graph. The role of quadrilateral faces in the size of these guard sets is intriguing. Of our bounds, one depends on the number of quadrilateral faces, while the other does not. The first bound ($\frac{n}{3} + \frac{\alpha}{9}$)  almost matches the lower bound for graphs with few quadrilateral faces, while the second bound ($\frac{3n}{8}$) is stronger for graphs with many quadrilaterals -- the two bounds balance at $\alpha = \frac{3n}{8}$ since  $\frac{n}{3} + \frac{3n}{72} = \frac{3n}{8}$. It is interesting that quadrilateral faces are the limiting factor in all techniques based on graph colorings. In contrast, our iterative technique appears to be limited by the local nature of the operation. Thus, the solution may lie in a more global approach that does not stumble over quadrilateral faces.

We leave as an open question to close the gap between the upper and lower bounds, both for maximal planar graphs and general planar graphs.

\bibliographystyle{abbrv}
\bibliography{edge-guarding}

\end{document}